\documentclass[11pt,a4paper]{article}

\usepackage[in]{fullpage}
\usepackage{natbib}
\usepackage{apalike}
\usepackage{amssymb}
\usepackage{amsmath}
\usepackage{amsthm}
\usepackage{graphicx}
\usepackage{floatrow}
\usepackage[hidelinks]{hyperref}
\usepackage{sidecap}
\usepackage{txfonts}
\usepackage{authblk}

\newcommand{\aj}{Astron. J.}

\newcommand{\apj}{Astrophys. J.}
\newcommand{\aap}{Astron. Astrophys.}

\newcommand{\planss}{Planetary and Space Science}
\newcommand{\icarus}{Icarus}
\newcommand{\mnras}{MNRAS}

\newtheorem{theorem}{Theorem}

\theoremstyle{remark}
\newtheorem*{remark}{Remark}

\let\OLDthebibliography\thebibliography
\renewcommand\thebibliography[1]{
  \OLDthebibliography{#1}
  \setlength{\parskip}{0pt}
  \setlength{\itemsep}{5pt plus 0.3ex}
}

\begin{document}

\title{Asteroid models from generalised projections: Essential facts for asteroid modellers and geometric inverse problem solvers}

\author[1]{Mikko Kaasalainen}
\author[2]{Josef \v{D}urech}
\affil[1]{Tampere University, Tampere, Finland}
\affil[2]{Astronomical Institute, Faculty of Mathematics and Physics, Charles University, Prague, Czech~Republic, e-mail: durech@sirrah.troja.mff.cuni.cz}
\renewcommand\Affilfont{\itshape\normalsize}
\date{\normalsize\today}

\maketitle

\begin{abstract}
\noindent
We present a review of the problem of asteroid shape and spin reconstruction from generalised projections; i.e., from lightcurves, disk-resolved images, occultation silhouettes, radar range-Doppler data, and interferometry. The aim of this text is to summarize all important mathematical facts and proofs related to this inverse problem, to describe their implications to observers and modellers, and to provide the reader with all relevant references.
\end{abstract}

\section{Introduction}
The title combines astronomy and mathematics in three ways: while our application is asteroid modelling, generalised projections are the mathematical concept that defines the data we use, and ``models from'' means that we are dealing with mathematical inverse problems. Make no mistake: while the data are obtained by near-automatic instruments and the analysis of the data can now be carried out largely by software tools made for this purpose, it's not just machines. Some basic mathematical understanding of the problem is always required as well, even if the calculations, methods, and principles are hidden in the software. Creating a model that fits the data is not the same as understanding and solving the inverse problem. This is why we have written this brief history and review of asteroid models from inversion techniques that are demonstrably exact, well-founded and well-established, explaining the main points and leaving details to the references. There are hundreds of papers related to this problem field by now, so one can easily lose track of the basic principles. The main point is that this is a mathematical rather than merely computational problem. This text concerns the shape and spin (and some surface characteristics) of asteroids; mathematical modelling of the insides of asteroids from in-situ radio data is a newer topic, described in, e.g., \cite{Pur.Kaa:13, Pur.Kaa:16}.

By now the theory and practice of inverse problems of {\em generalised projections} (GPs) is well established, so asteroid modelling rests on a solid mathematical basis. GPs are, in short, integrals over parts of the surface of a three-dimensional body that correspond to various types of observational data \citep{Kaa.Lam:06}. They generalise the concept of a projection in a given direction (silhouette) to various forms such as photographic or radar images. In instrumental sense, quanta from the surface are integrated over time to form a quantitative reading. What the integral of a GP looks like depends on the instrument. Here we discuss three classes of GPs in three sections: (i) lightcurves, (ii) images and silhouettes, and (iii) radar and interferometry.

\section{Lightcurves}
\subsection{Shape}
In its simplest form, a GP is the apparent brightness of a body when illuminated from some direction and viewed from another. The observable at different geometries as a function of time is called a {\em lightcurve}; the corresponding GP integral is over the visible and illuminated part of the surface, taking into account its light-scattering properties.

If the brightness GP is known for several geometries, can the shape of the body be inferred from these? This is a fundamental mathematical problem. Or, rather, an {\em inverse problem}: computing the brightness of a known surface is quite straightforward, whereas the inverse direction is something else. Is there a unique answer? Even if yes, what happens to the result if the data are noisy or sparse or we are not sure how the surface scatters light?

The first step towards the solution of the inverse problem was given in \cite{Rus:1906} \citep[for a modern version, see][]{Kaa.ea:92}: if the viewing and illumination directions are the same, the inverse problem does not have a unique answer. About half of the information needed for reconstructing the shape is inevitably lost, so there is no chance of a decent solution. Russell assumed geometric scattering; i.e., brightness directly proportional to the area of the silhouette, which is roughly the case for dark surfaces \citep[in][it was shown that the same nonuniqueness applies to, e.g., diffuse bright surfaces but, in fact, concerning the other half of the shape information]{Kaa.ea:92}. Since Russell's geometry was assumed to be a good approximation of realistic asteroid observations, the result stopped the study of the inverse problem for almost a century. Approximations by fixed shapes such as ellipsoids were supposed to be the only thing one can get out of brightness data \citep{Mag.ea:89}.

The inverse problem was shown to be uniquely solvable in \cite{Kaa:90} and \cite{Kaa.ea:92} (interestingly, group-theoretical tools employed in quantum mechanics turned out to be necessary for the proof). The uniqueness requires separate viewing and illumination directions; obviously, the larger the separation (phase angle) the better. The uniqueness proof requires an assumption of a convex surface to facilitate general analytical solutions;\footnote{The fundamental nonuniqueness (independent of the illumination phase angle) and some uniqueness properties of nonconvex solutions are proven in \cite{Vii.ea:17} and \cite{Kaa:19a}.} the scattering behaviour, however, is not especially restricted (as long as the intrinsic darkness or albedo is uniform). What is more, the inverse problem is actually rather stable: moderate errors in the data or the assumed scattering cause only moderate rather than drastic changes in the inferred shape. This is called Minkowski stability, since the solution requires the solution of the Minkowski problem that is very stable \citep{Lam:93}. One version of the problem is: if the areas of the facets of a convex polyhedron are known, is the shape unique? Minkowski proved that it is, and provably robust shape reconstruction procedures are described in, e.g., \cite{Kaa.ea:92, Kaa.ea:01}, \cite{Lam:93}, and \cite{Lam.Kaa:01}. 

Once the uniqueness was assured, the next step was to see what happens with real data. As is typical for many cases, not all the methods used in the proof were directly suitable for actual inversion. Both convex and nonconvex shapes were used in numerical inversion in \cite{Kaa.ea:01}; the former for the proven uniqueness, and the latter for generality despite the ambiguities. A robust practical procedure, based on nonlinear optimisation to find the rotational properties of the target and solving for the facet areas of a polyhedron representation and then applying the Minkowski procedure, is given in \cite{Kaa.ea:01} and available as software at Database of Asteroid Models from Inversion Techniques \citep[DAMIT,][]{Dur.ea:10}.\footnote{\url{http://astro.troja.mff.cuni.cz/projects/damit/}} Indeed, the convex version turned out to be Minkowski stable in practice, and even less than ten good lightcurves from at least a few different geometries provide enough data for a unique model.  In fact, the explanatory level of convex inversion turned out to be too good in a sense: it seems that practically all lightcurves of known asteroids can be explained well with convex shapes. There are only few exceptions known out of over a thousand asteroids. So, while the solution was reassuringly stable, it turned out that there is a considerable limit to the detail one can infer from lightcurves.

The information limit was estimated in \cite{Dur.Kaa:03}. Even if there are quite large nonconvex features on the surface, they usually have no unique signature for most realistic geometries: the solar phase angle should be several tens of degrees of arc (only near-Earth asteroids can be observed at such geometries) before a convex shape cannot explain the data. Thus, in most cases, nonconvex models simply cannot be more than an artist's vision even though they look more realistic than a convex shape. This is a pity, since nonconvex inversion is actually easier technically than the convex one -- see the nonconvex methods and models in \cite{Kaa.ea:01, Kaa.ea:04, Vii.ea:15a}, also available at DAMIT.\footnote{However, for probing and sampling the shape solutions, nonconvex codes are not practical since each code depends on its choice of shape representation (support), regularisation for smoothness etc., and the corresponding setups -- see Sect.~\ref{sec:disk-resolved_data}, \cite{Vii.Kaa:14, Vii.ea:15a}. There is no comprehensively coverable nonconvex shape class in the same sense that there is a convex class; it is approximated by a collection of quite separate classes. Each class and setup chooses its best-fit solution.} The features of a nonconvex solution may look temptingly similar to those of probe-visited targets, but this is because the shape representation usually interpolates between surface points in a natural manner (e.g., by rendering a flat region seen in a convex solution as a smooth valley). This, of course, does not mean that the feature is based on any actual information if the convex model fits the data as well.
%\footnote{Merely re-discretising the points on a convex surface and then refitting them with a nonconvex shape representation may produce a more natural-looking body that may even be closer to the actual target in some respects, but that is no estimate.}

In mathematical sense, the convex version of the inverse problem is not very ill-posed: the answer is unique, and all that is needed to assure a stable solution is a positivity constraint and moderate regularisation to ensure a feasibility constraint. Computation and convergence are also very fast. On the other hand, the nonconvex version has inherent ambiguities \citep{Vii.ea:17, Kaa:19a}, requires regularisation for stability, and usually cannot be guaranteed in any sense to be correct in its features and details. Of course, few asteroids are exactly convex, so a convex solution inherently warns us of the same information limit: in most cases, we can only infer coarse global-scale shape properties. Further corroboration is provided by ground truth, the accurate shape models from space-probe data (asteroids such as Eros, Gaspra, and Itokawa). Lightcurves computed from these models do not fit the observed ones as well as those from convex inversion models. This is essentially due to systematic errors in data and the scattering model that is never known accurately \citep{Kaa.Dur:07}; in any case, it emphasises the fundamental resolution limit of brightness data. 

Many of the concepts above can be conveniently examined in the simplistic two-dimensional context \citep{Ost.Con:84, Kaa:16}. This is not useful for real-world problems as such, but the importance of the illumination phase angle and the essence of Minkowski stability can be explained very compactly. Indeed, a main point in interpreting brightness mostly due to the shape rather than albedo changes is not just that this is realistic (as seen from probe images) but that the solution is much more stable. Significant albedo changes can be inferred from brightness data, in
which case this can be modelled as a spot in a 3-D shape model \citep[e.g., Psyche,][]{Vii.ea:18}. Such models are not unique; for example, one can always make a spot area sharper and brighter and/or the opposite end more blunt and darker. The existence and estimate of albedo variegation is deduced in the complete modelling process. Strong asymmetry of a single lightcurve may imply an albedo spot only if the data are from zero solar phase angle.

\subsubsection{Implications for observers}
\label{sec:implications}

\paragraph{Solar phase angle} A convex shape model can be uniquely reconstructed only if the viewing and illumination geometries are not the same, i.e., when the solar phase angle is not zero. In practice, lightcurve inversion works well up to the distance of Jupiter Trojans with heliocentric distances of about 5\,AU, where the maximum phase angle is $\sim\!12^\circ$. For more distant populations of small bodies observed from Earth, the solar phase angle is always too small to enable shape reconstruction from lightcurves.

\paragraph{Relative vs. absolute photometry} For the purpose of shape reconstruction, absolute photometry is always better than just relative lightcurves with unknown zero point. However, most lightcurve observations come in form of relative data. This is no problem for the inversion -- scale factors for individual lightcurves (shifts in magnitude scale) are free nuisance parameters of the optimization. The disadvantage of relative lightcurves is that the shape uncertainty along the rotation $z$-axis is large. An extreme example would be observation of straight-line lightcurves in which case we cannot tell how flattened the spheroid is. Also with long-period rotation when the full lightcurve cannot be observed in a single night, relative lightcurves provide poor information about the amplitude.

\paragraph{Colours} If we have calibrated observations in different colours, the colour indices can be easily obtained from the inversion as ratios between scale factors for lightcurves in different colours.

\paragraph{Observation noise} The random noise is not really an issue as long as it is not large -- very accurate lightcurves cannot give a more detailed model -- cf. our photometric laboratory asteroid in \cite{KaaS.ea:05} with lightcurve accuracy of order 0.001. The laboratory asteroid shows the ground truth and proves that a model with certainly wrong light-scattering properties nevertheless is a good shape+spin one (Minkowski stability) and can fit the data exactly (so the accurate data do not increase resolution). One should keep in mind that there is always a shape/albedo ambiguity present, so any attempts to reconstruct shape details are (apart from other effects) inevitably limited by the assumption of light-scattering properties being constant over the surface.

\subsection{Spin}
\label{sec:spin}

So far we have discussed the shape solution. With asteroids, one must simultaneously solve for the spin; i.e., the period, the pole ecliptic latitude $\beta$ and longitude $\lambda$. Precessing asteroids or time-dependent periods are a straightforward extension of this \citep{Kaa:01, Kaa.ea:07}, so we only consider the majority class of fixed-spin targets. Many of these have orbits close to the Earth's orbital plane that place them in the domain of the 180-degree spin longitude ambiguity \citep{Kaa.Lam:06}. It means that for geometry restricted to one plane, the coordinates of the Sun and Earth in the asteroid frame are exactly the same for the original asteroid and such the is mirrored along its equatorial plane, its pole longitude is $\lambda + 180^\circ$, and its initial rotation is $\phi_0 + 180^\circ$. In practice, this affects the spin latitude and shape estimates as well, but usually not very much. This fundamental ambiguity concerns all data integrated over the whole surface, regardless of the wavelength. It applies to any process history or sequence of disk-integrated data, regardless of the orbital positions, if it occurs at the planar geometry. Thus, e.g., thermal infrared cannot resolve this ambiguity.

This is in fact the only nontrivial spin ambiguity. For completeness, let us prove that the spin solution is unique for non-planar viewing geometries:

\begin{theorem}
 For extensive geometry coverage, there is only one shape and spin combination that matches the data (assuming a non-spheroidal shape and nonzero spin).
\end{theorem}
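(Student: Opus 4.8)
\emph{Proof plan.} The strategy is to reduce the statement to the fixed-spin result already quoted --- that a convex shape is uniquely determined from lightcurves taken at nonzero solar phase angle --- by showing that extensive, non-planar geometry coverage forces the rotation period, the pole, and the rotational phase origin (the last only up to the harmless freedom in where one sets $\phi_0$) to be unique. The two hypotheses do the essential work: a nonzero spin together with a non-spheroidal body guarantees that the disk-integrated brightness genuinely varies with rotational phase at essentially every geometry, so there is real rotational information to exploit; and ``extensive coverage'' must be read as, concretely, data from several apparitions whose sub-Earth and sub-solar directions, transported into the body frame, are not confined to a single plane --- the planar case being exactly the one that leaves the $\lambda\to\lambda+180^\circ$ ambiguity discussed above.

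The period is the easy step. Over a single well-sampled apparition the brightness is, to leading order, a non-constant periodic function of $\phi=\phi_0+2\pi t/P$, so its fundamental frequency $1/P$ is directly observable (the ``smallest period'' convention absorbs the harmless case of a body with rotational symmetry about its axis); sampling aliases of the classical beat type are the only obstruction and they do not survive once data from several apparitions with different cadences and geometries are combined, while a wrong period accumulates an unbounded phase error over a long baseline and destroys the fit. Hence $P$ is unique.

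The pole carries the real content. Suppose, for contradiction, that two distinct poles $p\neq p'$ (with the period now fixed) each admit a convex shape --- $G$ under $p$, $G'$ under $p'$ --- reproducing all the data. Send each epoch $t$ to the pair of body-frame Sun and Earth directions it induces; this traces a curve $\gamma(t)$ on $S^2\times S^2$ for $p$ and a curve $Q(t)\gamma(t)$ for $p'$, where $Q(t)$ is the rotation relating the two body frames (a fixed rotation conjugated by the running rotation about the pole, times a constant phase offset). The solar phase angle at each epoch is fixed by the ephemeris and is rotation-invariant, so it agrees along $\gamma$ and $Q\gamma$ and both reinterpretations are geometrically admissible. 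The fit demands $\Phi(Q(t)\gamma(t);G')=\Phi(\gamma(t);G)$ at every observed $t$, where $\Phi(\cdot;G)$ is the brightness functional of a convex body with surface-area measure $G$ --- real-analytic in the geometry and injective in $G$ for the uniform-albedo scattering laws to which the original proofs apply. Because the coverage is extensive and non-planar, $\{\gamma(t)\}$ lies in no proper analytic subvariety of $S^2\times S^2$ compatible with $Q(t)\not\equiv\mathrm{id}$; analytic continuation then forces the two brightness functions to coincide on all admissible geometries, and comparing their intrinsic structure --- for instance the loci, as the geometry varies, of the brightness extrema, which for a convex body track genuine body-frame features (the rotational phases of projected-area extrema) --- forces $Q\equiv\mathrm{id}$, hence $p'=p$ and $\phi_0'=\phi_0$; the fixed-spin result then gives $G'=G$. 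Equivalently, in a spherical-harmonic / Wigner-$D$ description the rotation carrying $G$ into $G'$ would have to be simultaneously consistent with all the distinct apparition geometries, over-determining it to the identity except in the excluded planar case.

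The step I expect to be hardest --- and the one needing real care in a full proof --- is exactly this last one: passing from ``the two analytic brightness functions agree along the observed trajectory'' to ``they agree everywhere with the same body-frame structure, so $Q\equiv\mathrm{id}$.'' A rigorous version calls for (i) a sharp statement of what ``extensive'' must guarantee --- enough apparitions, with sufficiently varied aspect and sub-solar latitude, that the swept region of $S^2\times S^2$ avoids every proper subvariety invariant under a candidate nontrivial $Q(t)$; (ii) restricting to the uniform-albedo scattering laws for which $\Phi$ is analytic and injective in $G$, since the fixed-spin uniqueness is established there (\cite{Kaa:90, Kaa.ea:92, Kaa.ea:01}); and (iii) ruling out a compensating interplay between a slightly wrong period and a distorted shape, which the long-baseline phase-drift remark above handles. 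It is worth re-emphasising where the hypotheses bite: a spheroid spun about its symmetry axis gives flat lightcurves and no pole (or shape) information at all, and about any other axis it keeps a residual symmetry --- the non-spheroidal assumption removes exactly these degenerate cases, just as non-planar coverage removes exactly the $\lambda\to\lambda+180^\circ$ ambiguity.
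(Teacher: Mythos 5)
Your route is genuinely different from the paper's, and the difference matters. The paper sidesteps the whole difficulty you flag as ``hardest'' by choosing a much stronger idealisation of ``extensive coverage'': it assumes the brightness can be recorded from \emph{all} directions around the target in an instant, so that the already-proven shape uniqueness applies independently at each of two instants (taken within half a period of each other). The spin then falls out almost for free --- the two reconstructed shapes differ by a three-dimensional rotation, and aligning them yields the pole from two Euler angles and the period from the third; the non-spheroidal hypothesis is exactly what makes that aligning rotation unique, and nonzero spin is what makes the two snapshots differ at all. (The paper also offers a second shortcut: any direction in which the lightcurve is flat must be parallel to the spin axis.) You instead work along the actual orbital trajectory of geometries on $S^2\times S^2$, splitting off the period via a Fourier/aliasing argument and then trying to force the relating rotation $Q(t)$ to the identity by analytic continuation. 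That is closer to the practical situation --- the paper itself concedes its idealisation ``is not very helpful in practice where the whole time series depends on the orbital geometries'' --- but it leaves a genuine gap precisely where you say it does: you never establish that the swept set of geometries avoids every proper subvariety invariant under a nontrivial $Q$, nor that agreement of the two brightness functionals along the trajectory propagates to all geometries. Under the paper's reading of ``extensive coverage'' that entire machinery is unnecessary, and under your more realistic reading the theorem is not actually proved by your sketch. Your identification of the roles of the two hypotheses and of the planar $\lambda+180^\circ$ case as the unique residual ambiguity does match the paper.
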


\begin{proof}
  First we note that the planar $\lambda$+180-ambiguity is the only condition of invariance of reproduced data with a simple transformation of the essentially invariant shape. This is seen from the three rotation matrices corresponding to each spin angle \citep{Kaa.Lam:06}. The sinusoidal elements of the matrices do not allow further simple mimicking transformations. In the trivial case of zero spin (a constant rotation-phase matrix), there is an infinite variety of rotated but otherwise similar shapes that match the data. 

  In the uniqueness proof of shape, we assumed infinitely dense, accurate, and full coverage of observations since that is the only way of having an exact proof without haggling about the setup, accuracy, and number of observations (those belong to the realm of stability, not uniqueness which always assumes ideal conditions). But with rotation, the concept of ideal coverage is not so easily defined independent of the measurement setup. Let us first assume that we have a system that records the brightness from all directions around the rotating target in an instant (compared to the rotation rate). The information from one instant is sufficient to determine the shape separately. Then we need only one more instant to infer the spin properties (assuming the instants are within one half of the rotation period). Since the shapes are the same up to a three-dimensional rotation (requiring three Euler angles), and one Euler angle is the rotation around the target's axis, we immediately get the unique pole direction from two Euler angles, and the period from the third rotation angle, when we align the shapes to coincide. Or consider another case: we have not determined the shape, but the geometries have revealed a direction in which the lightcurve is flat. Then it must necessarily be parallel to the rotation axis.
\end{proof}

Of course this is not very helpful in practice where the whole time series depends on the orbital geometries even if we have infinite time. In practice, the spin information is strong since rotation is very constraining. If we have lightcurves from a few geometries not close to each other, the correct spin solution stands out from local minima in the goodness-of-fit comparison. The determination of the spin axis direction is also not affected by the shape/albedo ambiguity.

The number of points is not that crucial if there are sufficiently many geometries: successful inversion (though obviously crude for shape) can be carried out for very sparse photometry. The strong constraints of the model make it possible to solve for the period, spin and shape with sampling frequencies orders of magnitude lower than the usual Nyquist criterion \citep{Kaa:04, Dur.ea:07, Dur.ea:09, Dur.Han:18}. This is the most important modelling mode for asteroids, since useful data can be obtained from hundreds of thousands of targets from large sky surveys, much more than can ever be individually targeted. If the brightness data are so scarce that no models of single targets can be made, one can still infer statistically meaningful distributions of some spin and shape characteristics in large object populations. The crude but analytical and provably unique model is based on the variation levels of extensive sets of brightness measurements \citep{Nor.ea:17, Kaa.Dur:20}. The underlying assumption is that such sets cover the required observing geometries widely and densely enough. The approach is mathematically quite new, potentially introducing a new class of inverse problems. 

Models can also be constructed from combined sets of data obtained at various wavelengths. With thermal infrared and optical data, this can be used to infer surface characteristics from thermal models \citep{Del.ea:15, Dur.ea:17}, and other infrared lightcurves can safely be combined with optical data as the differences between lightcurves at different wavelengths are small enough in view of the obtainable resolution \citep{Dur.ea:18c}.

\subsubsection{Implications for observers}

\paragraph{Spin and period}
As mentioned above, several lightcurves observed under sufficiently different geometries are enough to uniquely reconstruct the shape and spin state. What does it mean in practice? For near-Earth asteroids that come close to Earth, the geometry usually changes a lot during the encounter so it is in principle possible to reconstruct the model from a single apparition. For main-belt asteroids, we usually have to wait several years to collect data from at least two or three apparitions to cover the geometry. 

Another practical issue is the determination of the sidereal rotation period. The local minima in the period parameter space are separated by about $0.5 P^2 / \Delta T$, where $P$ is the rotation period and $\Delta T$ is the interval covered by observations \citep{Kaa:01}. So with any inversion algorithm, we have to make sure that the global minimum is found otherwise the spin and shape solution may be wrong even if the difference between the periods for local and global minimum is small. With a gradient-based approach, this is done by starting the optimization at dense-enough set of initial periods on the whole interval of possible values. The range can be estimated by finding the synodic rotation period and its uncertainty (Fourier analysis). With sparse-in-time photometry, the sampling is sparse with respect to $P$ and we do not have any initial estimate of the period. Then the interval has to cover all physically realistic values -- from the spin-barrier limit of two hours up to hundreds or thousand hours.

\paragraph{Pole $\lambda + 180^\circ$ ambiguity}
In reality, the geometry is not strictly coplanar and data are not ideal, so the shapes corresponding to $\lambda$ and $\lambda + 180^\circ$ poles are not exactly mirrored versions of each other. If such shapes are then used to fit thermal infrared data, one of them can fit significantly better that the other and this can lead to an incorrect conclusion that the IR data resolved the ambiguity. However, this might be just because shapes from lightcurves are slightly different and thermophysical modelling is sensitive to shape details \citep{Han.ea:15}. An example is asteroid (720)~Bohlinia for which \cite{Del.Tan:09} showed that one pole solution is significantly better than the other. However, given the low orbital inclination of only about $2^\circ$, this was caused just by random variations of the shape rather than by breaking the degeneracy.

Also note that $\lambda + 180^\circ$ ambiguity has nothing to do with the prograde/retrograde ambiguity in pole direction, which is usually caused by insufficient data.

\paragraph{Light-scattering model}
In theory, the scattering properties of the surface can be reconstructed together with the shape (Theorem~\ref{th:master}). In practice, the scattering model is rather simple and often fixed. With relative lightcurves, we do not have any information of how the brightness changes with solar phase angle as the phase angle is almost constant for individual lightcurves. With absolute lightcurves or calibrated sparse photometry that cover a wide range of phase angles, it is necessary to fit also the phase curves. This can be done either by using a physical light-scattering model (Hapke or Lumme-Bowell, for example) or a simpler mathematical model -- an exponential-linear phase function multiplied by a combination of Lomell-Seeliger and Lambert scattering \citep{Kaa.ea:01} -- for example. An advantage of this model is that it separates the phase-dependent part and the other part that depends only on illumination and emission angles. So with sparse data or absolute lightcurves, all phase parameters are fitted, while with relative lightcurves, the phase-dependent part is held constant. If the data do not cover low phase angles, the exponential part can be set to zero and only the linear part is fitted.

The phase-angle behaviour is usually assumed to be the same for all data, i.e., the coefficients are the same independently on filter. However, due to the phase-reddening effect, accurate photometry can reveal differences in phase curves for different colors and this requires separate parameters for different colours.

\paragraph{Relative/calibrated/absolute photometry}
By absolute photometry we mean that the brightness is known in absolute flux units. This is not needed in practice, because without the knowledge about the asteroid size, there is always size/reflectance ambiguity that cannot be solved with optical data only. So absolute photometry is needed only when optical photometry is combined with thermal infrared measurements and the model is scaled to real size with a realistic scattering model \citep{Del.ea:15, Dur.ea:17}.

For the purpose of lightcurve inversion, calibrated photometry is sufficient, i.e., for a subset of lightcurves or data points we have information about their common zero point, so they are internally calibrated. If there are more subsets with different zero points, they are arbitrarily shifted on magnitude scale. An example are observations in different filters. Sparse-in-time data typically cover a wide interval of phase angles, they have to be internally calibrated and the phase-function coefficients have to be fitted.

With relative photometry, only the shape of the lightcurve is know, not its brightness with respect to other lightcurves, so there is less information (see Sect.~\ref{sec:implications}). Nevertheless, relative lightcurves are what we usually have to deal with.

\paragraph{YORP}

When talking about spin in Sect.~\ref{sec:spin}, we assumed that asteroid rotation could be described as rotation around a fixed spin axis with a constant rotation rate. However, rotation state of asteroids is affected by solar radiation and anisotropic thermal radiation from the surface. If the net torque does not cancel out when averaged over the surface, rotation, and orbit, then it leads to a secular change of the rotation period and the spin axis direction -- an effect know as Yarkovsky--O'Keefe--Radzievskii--Paddack (YORP) effect. The YORP-induced change of rotation rate can be directly observed as a change of the rotation period \citep{Low.ea:07, Tay.ea:07} or as a change of the rotation phase accumulated over long time \citep{Kaa.ea:07}. The YORP effect can be easily included into the model as an additional parameter describing the linear change of the rotation frequency with time. YORP also changes the direction of spin axis but these changes are so slow that they cannot be detected from current data.

\paragraph{Tumbling}

Even without any external force, the rotation of an asteroid can be more complicated than a relaxed rotation around a fixed axis. Some asteroids are in force-free precessing rotation state. Such rotation can be described by Euler equations that determine the evolution of Euler angles. The rotation is fully described by eight parameters: the angular momentum vector (three components), initial values of three Euler angles, and two principal moments of inertia (normalized by the third one). The size of the angular momentum and the initial nutation angle can be replaced by two other parameters: the rotation and precession periods that can be estimated from Fourier analysis of lightcurves \citep{Kaa:01, Pra.ea:05}. In practice, reconstruction of the tumbling rotation state is more complicated than in case of principal-axis rotation, scanning of all physically possible combinations of parameters is time-consuming, so the best way how to proceed is to estimate candidate rotation and precession periods from the most prominent periods in the lightcurve signal \citep{Sch.ea:10,Pra.ea:14}.

\paragraph{Binaries}

Binary and multiple systems are in general more ``interesting'' than single bodies \cite[see the review of][for example]{Marg.ea:15} but reconstructing their geometrical and dynamical parameters is much more complicated. A general case can be very complex: two similar-size bodies orbiting each other on an eccentric orbit, their rotation periods not synchronous with the orbital period, the orbital plane may be precessing on longer time scales, the spins may not be aligned with the normal to the orbital plane, etc. The most detailed model we have so far was reconstructed by \cite{Ost.ea:06} from radar observations of asteroid 1999~KW4. The system was later  directly imaged by VLT.\footnote{\url{https://www.eso.org/public/news/eso1910/}} 

We cannot expect to reconstruct such details from lightcurves. However, mutual events that we observe in lightcurves can be fitted with a simple model assuming ellipsoidal shapes of the components and geometry of the orbit can be reconstructed \citep{Sch.Pra:09, Car.ea:15}. 

A special case is a fully synchronous system, i.e., when two bodies orbit each other on a circular orbit and both rotation periods are the same as the orbital period. For a distant observer, such system behaves like a single body because the mutual orientation of the components does not change. A slight modification of algorithms that can model nonconvex shapes enables us to model synchronous binaries. However, the nonconvex ambiguity applies also here, so we cannot expect to reconstruct correct shapes from lightcurves only. Any disk-resolved data would help, of course, so for example, two stellar occultations by asteroid (90)~Antiope revealed clear concavities \citep[][Sect.~3.10]{Dur.ea:15b}.

\subsection{Master theorem}

Taking into account all of the above, we can, for completeness, write the full theorem of lightcurve inversion in all its glory. It includes the solution of the shape, spin, and physically realistic scattering behaviour (not explicitly stated previously):

\begin{theorem} 	
 \label{th:master}
 Extensive brightness data at nonzero solar phase angles uniquely determine the shape, spin, and scattering properties (excluding albedo variegation) of an asteroid (when the shape is constrained as in \cite{Kaa.ea:92} and \cite{Vii.ea:17}, and scattering as in \cite{Kaa.ea:92} for a unique solution).
\end{theorem}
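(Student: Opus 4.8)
The plan is to assemble three results already established or cited above --- the spin uniqueness of the preceding theorem, the convex-shape uniqueness of \cite{Kaa.ea:92} (with the non-convex constraints of \cite{Vii.ea:17}), and Minkowski's theorem via \cite{Lam:93} --- and then to supply the one new ingredient, the recovery of the scattering law. Throughout, ``extensive brightness data at nonzero solar phase angles'' is read as: the disk-integrated brightness $L$ is known for a dense set of observing geometries that is not confined to a single plane and every one of which has phase angle $\alpha\neq 0$. Write $L$ as the generalised-projection integral over the part $A_+$ of the surface that is simultaneously visible and illuminated; for a convex body $A_+$ is a lune on the Gauss sphere and $L=\int_{A_+} S(\mu,\mu_0,\alpha)\,C(\mathbf n)\,d\sigma(\mathbf n)$, where $C$ is the curvature function (reciprocal Gaussian curvature as a function of the outward normal $\mathbf n$), $\mu$ and $\mu_0$ are the cosines of the emission and incidence angles, and $S$ is a scattering law from the admissible class of \cite{Kaa.ea:92}. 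The unknowns are the spin state (pole $\beta,\lambda$, period, initial phase), the function $C$ on the sphere, and $S$.

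First I would fix the spin. By the spin theorem above, a non-planar geometry coverage removes the planar $\lambda+180^\circ$ ambiguity, and the zero-spin continuum of mutually rotated shapes is excluded by hypothesis; hence the rigid rotational part of the problem is pinned down and we may pass to the body frame, in which $A_+$, $\mu$, $\mu_0$ and $\alpha$ are explicit functions of the now-known geometry.

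Next comes the joint recovery of $C$ and $S$, which is the content of \cite{Kaa:90} and \cite{Kaa.ea:92}. For each phase angle $\alpha>0$ occurring in the data, the sub-collection of observations at that $\alpha$ (with its full aspect coverage) is the input of the classical convex inversion: the hypothesis $\alpha\neq 0$ is precisely what defeats Russell's nonuniqueness, and the spherical-harmonic and group-theoretic apparatus there extracts the facet areas of the convex polyhedron --- whence, by Minkowski's theorem and its stability \citep{Lam:93}, the body itself, uniquely --- together with the scattering behaviour at that $\alpha$. Since there is only one true body, the $C$ so obtained must be the same at every observed $\alpha$; letting $\alpha$ range over the data and $\alpha\to 0^+$ then assembles $S$ over its whole domain within the admissible class. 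The constraints of \cite{Vii.ea:17} are needed only if one wants the conclusion phrased for the admissible non-convex class rather than the strictly convex one.

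The hard part is making the previous step's separation of $C$ from $S$ honest: only the product $S\,C$ appears under the integral, so a ``fatter'' body with a more backscattering law threatens to mimic a ``thinner'' body with a more forward-throwing one. The resolution is that $C$ enters linearly and \emph{globally} --- the same $C$ weights every geometry --- whereas $S$ enters only through the \emph{local} scattering geometry $(\mu,\mu_0,\alpha)$, and the monotonicity, normalisation, and near-separable form imposed on the admissible scattering class in \cite{Kaa.ea:92} are exactly what rule out any such trade-off consistent with the full dense set of geometries; this is the technical core, inherited verbatim from that paper, and adding further phase angles only adds constraints, never removes them. Two remarks then finish the proof. A pointwise albedo factor $\rho(\mathbf n)$ would multiply $C(\mathbf n)$ inside the integral and is therefore genuinely indistinguishable from a change of shape --- hence ``excluding albedo variegation'' in the statement. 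And the hypotheses ``non-spheroidal shape'' and ``nonzero spin'' discard the two degenerate families against which no data can help: a spheroid can present a flat lightcurve and an undetermined spin about its symmetry axis, and a non-rotating body admits a whole continuum of mutually rotated shapes (cf. the proof of the spin theorem).
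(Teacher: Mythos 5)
Your overall architecture matches the paper's: the proof is an assembly of the spin theorem, the convex uniqueness of \cite{Kaa.ea:92} (summarised in \cite{Kaa.Lam:06}) with the conditional nonconvex extension of \cite{Vii.ea:17} and \cite{Kaa:19a}, plus a separate argument for the scattering law; your per-phase-angle decomposition and the closing remarks on albedo and the degenerate (spheroid, zero-spin) families are consistent with what the paper does. The one place where you diverge is precisely the one genuinely new ingredient the theorem adds, the uniqueness of the scattering solution, and there your stated mechanism is not the paper's. You attribute the shape/scattering disentanglement to ``monotonicity, normalisation, and near-separable form'' of the admissible scattering class ruling out a fat-body/backscattering trade-off; the paper instead points to a concrete counting argument: with the scattering law written in the general form of eq.~(5.15) of \cite{Kaa.ea:92} (equivalently eq.~(2.6) of \cite{Kaa.Dur:07}), the linear system for the shape coefficients becomes increasingly \emph{overdetermined} at higher harmonic degree, and it is this redundancy --- more equations than shape unknowns --- that is spent on solving for the scattering coefficients via eq.~(5.16). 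Your version is weaker: qualitative constraints on the scattering class by themselves do not obviously exclude a compensating deformation of the curvature function, and since you concede the step is ``inherited verbatim'' from \cite{Kaa.ea:92} you have in effect deferred it without identifying what actually makes it work. Two smaller points: the limit $\alpha\to 0^{+}$ you invoke to ``assemble $S$ over its whole domain'' is neither available from the hypotheses (all data are at nonzero phase) nor needed; and it is worth adding the paper's caveat that this scattering solution, though unique, is unstable --- Minkowski stability of the shape means quite different scattering laws fit with nearly the same shape, which is why one fixes the scattering in practice.
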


\begin{proof}
 The uniqueness and stability proof for convex shapes is summarised in, e.g., \cite{Kaa.Lam:06}. In \cite{Vii.ea:17} and \cite{Kaa:19a}, it was shown that, with some assumptions, a larger class of shapes than convex can be called reconstructable as suggested by numerical tests; e.g., \cite{Dur.Kaa:03, Kaa.Dur:07}. However, it is certainly a smaller class than that of tangent-covered bodies; i.e., the shapes reconstructable from silhouettes (of which convex bodies are a subset), and the reconstruction is not stable. (In practice: if a convex solution and a nonconvex one fit the data as well, there are infinitely many nonconvex shapes between the two that also fit the data.) The spin solution is discussed above. For scattering, a general form of a scattering model is given in eq.~(2.6) of \cite{Kaa.Dur:07}, and its truncations such as eq.~(2.7) are very close to typical scattering models such as Lommel-Seeliger, Lambert, or Hapke. The form (2.6) is exactly the same as eq.~(5.15) in \cite{Kaa.ea:92}, which leads to a unique solution of the scattering coefficients via eq.~(5.16). This is because the higher the order (the resolution level) of an unknown shape parameter is, the more there are equations for it, and this extra information can be used for a scattering solution.
\end{proof}

\begin{remark}
 The Minkowski stability for the (convex) shape means that rather different scattering properties can account for the data with quite similar shapes; thus, the scattering solution is unstable so that one uses essentially fixed scattering in inversion.
\end{remark}

\subsection{Instabilities near nonuniqueness; summary}

We conclude our review of the uniqueness and stability properties of the photometric inverse problem by discussing some cases where Minkowski stability does not help us. In inverse problems of continuously parametrised systems, ambiguities also translate to instabilities. If there is a nonuniqueness under some conditions, then there are unstable solutions under nearby conditions.  For example, even the zero solar phase angle is actually sufficient for a unique shape solution if the scattering is suitable \citep{Kaa.ea:92}. The often used combination of Lommel-Seeliger and Lambert scattering is suitable in principle: the L-S part (equivalent to geometric scattering at zero solar phase) yields the shape coefficients of even degree, and the Lambert part (diffuse scattering) those of odd degree. Scattering with a specularity exponent between one (L-S) and two (Lambert) would also yield all the coefficients. But the solution is inevitably unstable because of the data and model noise: any uncertainty of the brightnesses or the Lambert component, close to zero for dark bodies, results in large uncertainty in the coefficients of odd degree. The Lambert component is useful only at nonzero phase angles when its uncertainty is not a big issue. But, again, the phase angle needs to be sufficiently large (more than, say, ten degrees) for us to be clear from the instability region around the nonuniqueness condition. A further example of the same phenomenon is the case of nonconvexities close to strictly concave ones (those without saddle-like surface parts). Concavities, no matter how large or how deep, are never uniquely reconstructable from photometry alone \citep{Kaa.ea:04, Vii.ea:17} (the latter also contains further ambiguity-instability examples), so the reconstruction of any nonconvexities close to the nonuniqueness condition is unavoidably unstable. 

In summary, to obtain a fast global-scale model of an asteroid from photometry, one can solve for the spin and construct a convex shape with Convexinv.\footnote{\url{https://astro.troja.mff.cuni.cz/projects/damit/files/version_0.2.1.tar.gz}} For further educated guesses, one can construct photometry-based nonconvex models with ADAM\footnote{\url{https://github.com/matvii/ADAM}} using both octantoid and subdivision surface representations (see Sect.~\ref{sec:disk-resolved_data} below) if there are several dense lightcurves to warrant this. A number of solutions obtained with various initial conditions are necessary. Then one should compare the results from the three shape representations to identify common large-scale features \citep{Vii.ea:15a}. Such models are not rigorously defined and they are mostly governed by the representation type and the chosen regularisation functions, but they may be useful for giving an impression of what the target might look like \citep[cf. the nonconvex models in][]{Kaa.ea:01, Kaa.ea:04}. 

Unless large solar phase angles are available \citep{Dur.Kaa:03}, the convex solution is the only scientific one in the sense that it has well-defined acceptance criteria and its approximate nature is well understood. It is the simplest solution with the fewest assumptions sufficient for explaining the data down to the noise level. The acceptance of a more elaborate model can only be based on manifestly better data fit (requiring large phase angles) or special assumptions limited to cases where the convex solution is not feasible even as an approximation \citep[e.g., choosing a bi-lobed shape instead of a convex option, cf.][]{Dur.Kaa:03}.
Otherwise such a model is not likely to be closer to the correct shape than the convex solution by any quantitative criteria \citep[see also the discussion at the end of the Appendix of][]{Kaa:19a}.

\section{Snapshot images, occultations, multidata}
\label{sec:disk-resolved_data}

The GP of adaptive optics (AO) images is essentially the photographic projection, with instrumental point-spread functions. Occultations basically provide samples of the silhouette of the target. Since AO processing typically causes some artefacts inside the image, and in any case the strongest contrast is along the contours of occlusion or shadows, most of the information is in the boundary contours \citep{Fet.ea:19}. In the same way as the inversion of the brightness GP data can be proven to be unique for convex bodies (and conditionally to some nonconvex ones; \cite{Vii.ea:17, Kaa:19a}), a class of nonconvex bodies can be proven to be reconstructable from image boundary contours \citep{Kaa:11}. This class is larger than that of tangent-covered bodies.

Since most of the information is in the dark/light boundary contours, one can use these directly in shape reconstruction (Kaasalainen 2011); this also removes any modelling errors due to scattering from the image analysis. On the other hand, one can also use the images as such: data or model errors in the images inside the contours are not very significant as weighed against the contour pixels. A robust way of doing this is to perform the fit in the Fourier transform domain rather than pixels \citep{Vii.Kaa:14, Vii.ea:15a}: this makes gradient-based optimisation and the control of the size of image features especially convenient.

Often there are not enough images for complete reconstruction; in such cases we can combine all the data from various instruments, especially lightcurves, with the images in the inversion process. This is in principle straightforward; the main question is the weight assigned to each data type. Finding the proper weights requires some experimenting and sampling. A criterion for the optimal weights can be established by requiring that each mode contribute its essential information but not more to yield solutions as consistent with each data source as possible. This is called the maximum compatibility estimate \citep{Kaa:11}. A procedure for combining all available data sources to produce asteroid models, All-Data Asteroid Modelling (ADAM), is described in \cite{Vii.ea:15a}. ADAM can be used for any data source separately (including nonconvex models from lightcurves only, but note the warnings above) or combined with others. 

Several asteroid models have been constructed especially with AO data and lightcurves, using either ADAM (the Fourier transform mode) or extracted boundary contours \citep{Car.ea:10, Car.ea:12, Mer.ea:13, Vii.ea:15b, Vii.ea:17, Han.ea:17a, Han.ea:17b}. Flyby images have also been combined with lightcurves in an analogous way to produce as constrained models as possible of the ``dark side'' combined with the exact cartography of the seen side \citep{Kel.ea:10, Sie.ea:11}. Occultation data can be included in the inversion process as such or used as a consistency check \citep{Dur.ea:11}. With ever-improving AO equipment, the best images in the process are approaching cartographic resolution \citep{Vii.ea:18, Car.ea:19, Ver.ea:18, Ver.ea:20, Mar.ea:20}, almost carrying the problem away from the realm of inverse problems into cartography, photoclinometry \citep{Gas.ea:08}, etc.\ that have well-established methods. But with all AO images, it is important to remember that a feature seen in an image can only be considered real and verified if it is seen in at least another image in a consistent manner. Otherwise it is only plausible or, if inconsistent with other data, nonexistent. Such spurious detail can be caused by the deconvolution or post-processing of raw AO images. The deconvolution of images is a separate inverse problem often practical for shape modelling and feature identification. While not perfect, it can be seen as further 3D-model-independent regularisation that introduces useful prior constraints (in addition to the point-spread function) into the full inverse problem.  

Another essential concept, closely linked with shape class, is the shape representation. There simply is no universal way of representing shape models in inverse problems: the choice always restricts the convergence and possible shape solutions in some way.\footnote{Add to this the data noise, restricted observation geometries, partly subjective choices in regularisation, and systematic errors in both data and the model (scattering etc.), and it becomes obvious that the whole thing is not solvable just by plugging some components together and letting the machine run.}  Convex bodies are simple to describe by the positive curvature translatable to shape via the Minkowski procedure, whereas nonconvex shapes have various subclasses and thus representations. These can be, e.g., starlike or radius displacements from given basic shapes or something specially made for a problem case \citep{Kaa.ea:01, Ost.ea:02b}, or generalised starlike/octantoid or subdivision control points \citep{Kaa.Vii:12, Vii.Kaa:14, Bar.Dud:18}, or level sets, etc. Ideally, at least two shape representations should be used in any given problem to establish the role of systematic shape effects \citep{Vii.ea:15a}.

An important issue in this context is also ``{\em inverse crime}'' \citep{Kai.Som:05}. Simulating data is necessary in testing an inversion method (and the general uniqueness and stability of the problem) or the information content available from a particular observational setup \citep[e.g.,][]{Nor.ea:17}. In such cases, one easily makes the mistake of creating the data with the same model they are inverted with, which is called inverse crime. This can result in much too overconfident solutions and exaggerated resolution. Avoiding inverse crime is not just about not using the same discretisation level in data creation and analysis: it applies to data noise (that should include non-Gaussian elements), surface scattering (that, in addition to different forward and inverse problem models, should include some random behaviour over the surface in data creation), the surface representations, and so on. Even with these precautions, computer simulations in asteroid modelling are always likely to produce better results than in real life simply because nature is much more imaginative in producing variation in physical conditions and error sources than we are. 

In spite of the speed and parallelisability of computers and clusters, some numerical and computational aspects require consideration. In optimisation, procedures that only use the function value are slow and converge poorly. Genetic algorithms are a particularly poor variation of these for this problem, and they are not well suited for general analysis as they typically lose the information on other eligible minima. In this problem class, local minima are usually sparse enough so that gradient-based optimisation procedures (in particular Levenberg-Marquardt designed for chi-square-type problems such as these) started from a variety of initial guesses are superior in both speed and accuracy. Moreover, by exploring the whole parameter space we can be sure that the global minimum is not missed. Also, the best efficiency by far is obtained from analytical gradients readily available in this case. In integration over model surface, ray-tracing is fast to carry out with some pre-listing of occluding surface patches \citep[triangles;][]{Kaa.ea:01}. Many parts of the problem parallelise easily to, e.g., GPUs \citep{Vii.ea:15a}. One can also use game-based image processing hardware, but this may not be useful for obtaining the function gradients. 

Obviously, no single rendering of the solution is sufficient in asteroid modelling from any sources: the differences between the rendering samples help to understand which parts of the solution are best constrained by the data. On the other hand, the usual sampling methods (MCMC) are slow and cannot be employed because systematic and model errors dominate over noise and are not modellable in the usual MCMC fashion, and because the chosen shape representation biases the results. The only bias-free shape space to sample would be a large set of mesh vertices each free to move around, but this is extremely difficult to arrange in practice, and even then one would need regularisation to keep the mesh from getting tangled. After various experiments, we have found the simplest way to sample is to compare the best solutions from different setups. 

\subsection{Implications for observers}

\paragraph{Adaptive optics}
Although AO images taken with the VLT/SPHERE instrument look almost like photographs taken by spacecraft \citep[Hygiea in][for example]{Ver.ea:20}, they are always product of \emph{deconvolution}, which can be source of inconsistencies and artefacts. For example, the deconvolved images of (4)~Vesta in \cite{Fet.ea:19} reveal amazing details on the surface at the diffraction limit (the limb was reconstructed with resolution better that the formal diffraction limit) but still there are artefacts on the disk near the limb. In general, the result of deconvolution depend on the method, parameter setup, and there is always a trade-off between the level of details and presence of false artefacts.

\paragraph{Occultations} 

With a dramatic increase of accuracy of stellar catalogues due to Gaia astrometry, occultations by asteroids can be now predicted with higher accuracy. The technical equipment is also improving so occultations will play an important role in shape modelling. They are also the only way how to obtain information about shapes of trans-Neptunian objects that are too far to be reconstructable by lightcurve inversion (opposition geometry) and also unresolvable by AO.

\section{Stranger projections: radar, interferometry}

Interferometry, especially that from ALMA at thermal IR wavelengths, is another high-resolution data source of asteroids. It is essentially the Fourier transform of a (thermal) image, so ADAM is readily usable for inversion \citep{Vii.ea:15a, Vii.ea:15b}. For the available resolution level (analogously to low-resolution AO images, boundary contours provide most of the shape information), the thermal modelling can be done fast by a semianalytical Fourier-series model \citep{Nes.Vok:08}.

Radar is the main source of high-resolution information from near-Earth asteroids \citep{Ben.ea:15}. Its GP is somewhat strange in the sense that it projects surface points with a given radial velocity and distance from the radar onto one projection image point. Thus the image is a many-to-one mapping, contrary to usual images, which affects its information content. While computational inversion of range-Doppler radar data has been in use since the early 1990's as described in \cite{Ost.ea:02b}, the uniqueness of the shape solution of the inverse problem was actually proved in \cite{Vii.Kaa:14}. The proof was based on using the boundary contours of range-Doppler images (that are even more pronounced in their information content than those of AO images), and it was shown that the shape class reconstructable from these is that of tangent-covered bodies. Other parts of the radar image provide further information and enlarge the reconstructable class. In \cite{Kaa.Lam:06}, it was shown that mere range radar data from a spherical target uniquely determine not only the reflectivity (corresponding to albedo map) on its surface, but also the distribution of other radar scattering properties. Doppler radar data uniquely determine the reflectivity.

Radar is thus a provably rich source of information, and it has the potential for very high resolution. However, it should be remembered that, due to the inherent ambiguities of the range-Doppler plots as well as the modelling and data errors, the actual resolution level of a radar-based model is seldom as high as the nominal (apparent) resolution of the images. This is well apparent from the ground truth probe images of Itokawa and Toutatis (although the available radar observation geometries were somewhat limited), or by the difference of the features of the Kleopatra radar model model with those of the AO one (even though the models are qualitatively similar). On the other hand, the case of asteroid Bennu, essentially a convex shape covered by many small boulders and a large one, shows that some detailed features (including the large boulder on the southern hemisphere) can be captured by a tailored radar model \citep{Nol.ea:13, Nol.ea:19}.

There are various software packages for radar modelling; radar data can be automatically included in ADAM input, and ADAM is orders of magnitude faster than traditional radar codes. In any case, the addition of special features to the ADAM or any other basic model usually requires a hand-crafted model within a model. Radar data are peculiar in that it is possible to have a sequence of high-resolution radar images without being able to model and place the corresponding detail anywhere in a 3D-model with reasonable certainty. With the ordinary photographic image projection, details from even one high-resolution image can in principle always be added (at least manually and tentatively) to a low-resolution shape model mostly based on other data such as lightcurves. Thus, assessing the reliability of shape models from radar data especially at limited geometries is considerably more complicated than in the case of ordinary plane-of-sky projections.

\setlength{\bibhang}{1em}

%\bibliographystyle{mikko}

%\bibliography{../../bibliography_all}

\newcommand{\SortNoop}[1]{}

\end{document}